\documentclass[12pt,a4paper]{amsart}

\usepackage{amsmath,amsfonts,amssymb}

\usepackage[hmargin=2cm,vmargin=2cm]{geometry}

\usepackage[hyperfootnotes=false,colorlinks=true,linkcolor=blue,%
citecolor=purple,filecolor=magenta,urlcolor=cyan]{hyperref}%[hidelinks]{hyperref}
\usepackage{nameref,zref-xr}

\usepackage{amsthm,amscd}
\usepackage{epsfig}
\usepackage{color}
\usepackage[all]{xy}
\usepackage{tikz}
\usepackage{graphics, setspace}
\usepackage{breqn}
\usepackage{amstext}
\usepackage{array}

\newcommand{\CC}{{\mathbb{C}}}

\newcommand{\PP}{{\mathbb{P}}}
\newcommand{\QQ}{{\mathbb{Q}}}

\newcommand{\F}[2]{{F_{#1}^{#2}}}

\newcommand{\id}{\mathrm{id}}

\newcommand{\bx}{{\bf x}}

\def\A{{\mathcal A}}

\def\F{{\mathcal F}}

\def\R{{\mathcal R}}

\newcommand{\p}{{\partial}}

\newcommand{\ZZ}{\mathbb{Z}}

\newcommand{\bt}{{\bf t}}

\newtheorem{theorem}{Theorem}[section]
\newtheorem*{theorem*}{Theorem}
\newtheorem{proposition}[theorem]{Proposition}

\newtheorem*{corollary*}{Corollary}

\usepackage{color}

\def\&{\vspace{-5pt}&}

\makeatletter
\newsavebox{\@brx}
\newcommand{\llangle}[1][]{\savebox{\@brx}{\(\m@th{#1\langle}\)}%
  \mathopen{\copy\@brx\kern-0.5\wd\@brx\usebox{\@brx}}}
\newcommand{\rrangle}[1][]{\savebox{\@brx}{\(\m@th{#1\rangle}\)}%
  \mathclose{\copy\@brx\kern-0.5\wd\@brx\usebox{\@brx}}}
\makeatother

\allowdisplaybreaks

\numberwithin{equation}{section}

\begin{document}

\title{
BKP and CKP hierarchies via orbifold Saito theory
}

\author{Alexey Basalaev}
\address{A. Basalaev:
\newline
Faculty of Mathematics, HSE University, Usacheva str., 6, 119048 Moscow, Russian Federation
}
\email{abasalaev@hse.ru}

\date{\today}

 \begin{abstract}
 Semisimple Dubrovin–Frobenius manifolds can be used to construct integrable hierarchies, following the work of Dubrovin–Zhang and Buryak. Examples of such hierarchies include the Kac–Wakimoto hierarchies, the KP hierarchy, among others. In all these examples, the Saito theory of isolated singularities played a crucial role.

 In this note, we show that the BKP and CKP hierarchies can likewise be constructed from Dubrovin–Frobenius manifolds. This new construction, however, utilizes the orbifold version of Saito theory for isolated singularities endowed with a symmetry group.
 \\
 \\
 Keywords: integrable systems, Dubrovin--Frobenius manifolds.
 \\
 MSC codes: 14H70, 14N35.
 \end{abstract}
 \maketitle

%   \setcounter{tocdepth}{1}
%     \tableofcontents

\section{Introduction}

Tense connection between the cohomological field theories and integrable systems has been known since the last decade of XX. This was first conjectured by Witten (cf. \cite{W93}) that the partition functions of some cohomological field theories are tau--functions of the integrable hierarchies. Conjectures of Witten were proved by Kontsevich \cite{K92}, Faber-Shadrin-Zvonkine \cite{FSZ} and Fan-Jarvis-Ruan \cite{FJR}.

The genus $0$ potential of every cohomological field theory defines a formal Dubrovin--Frobenius manifold. Conversely if a Dubrovin--Frobenius manifold is semisimple, it can be used to reconstruct uniquely all genera potentials of some cohomological field theory uniquely.

It was later observed that Dubrovin--Frobenius manifolds can be used to construct integrable systems themselves rather then just tau-functions (cf. \cite{DVV}, \cite{D96}[Lecture 6] and other).

The general constructions of Dubrovin-Zhang \cite{DZ} and Buryak \cite{Bu} associate an integrable hierarchy to any semisimple Dubrovin--Frobenius manifold. Several special Dubrovin--Frobenius manifolds are known to have there specific constructions of the integrable hierarchies (cf. \cite{G03, FGM10, GM05, LRZ, P06, BDbN, B24, B25}). Some of these constructions were also extended to the flat F--manifolds (cf. \cite{ABLR, LPR, B22}).

Both constructions of Dubrovin--Zhang and Buryak applied to the ADE type Dubrovin--Frobenius manifolds give the hierarchy, equivalent to Kac--Wakimoto hierarchy of the same ADE type.
The full series of A--type Dubrovin--Frobenius manifolds can be used to construct the famous KP hierarchy.

It was essential to ask what Dubrovin--Frobenius manifolds provide BCFG Kac-Wakimoto hierarchies. This question was answered by Liu, Ruan and Zhang in \cite{LRZ}. In particular, they've found that BCFG Kac-Wakimoto hierarchies are given by the special subflows of the Dubrovin--Zhang hierarchies.
% of the certain Dubrovin--Frobenius manifolds.

Next to KP hierarchy well--known are BKP and CKP hierarchies (cf. \cite{DJKM}). Both are easy to define in the Lax form using the Lax operator of KP hierarchy and imposing just one additional constraint.
This is the second essential question to ask --- how to construct BKP and CKP hierarchies via the Dubrovin--Frobenius manifolds.

If such construction would exist, the corresponding Dubrovin--Frobenius manifolds would be isomorphic because BKP and CKP hierarchies are known to coincide dispersionless.

In this paper we make use of the orbifold Saito theories of A and D type singularities in order to construct BKP and CKP hierarchies.

\subsection*{Orbifold Saito theory}
ADE type Dubrovin--Frobenius manifolds have several equivalent definitions. One of them is via the Saito theory of the ADE type singularity (cf. \cite{S1, H}). In particular, one considers the unfolding of the polynomial $f \in \CC[\bx]$ defining the singularity in question. For $A_N$ and $D_N$ types the polynomials $f$ are given by $f_{A_N} := \frac
{1}{N+1}x_1^{N+1} + x_2^2$ and $f_{D_N} := \frac
{1}{N-1}x_1^{N-1} + x_1 x_2^2$.

Mirror symmetry sparkled investigation of the isolated singularities endowed with the symmetry groups. Following the physicists notation, the pairs $(f,G)$ where $f \in \CC[\bx]$ defines an isolated singularity and $G$ is its symmetry groups are now called \textit{Landau--Ginzburg orbifolds}.

Saito theory of the Landau--Ginzburg orbifolds has been under investigation of many researchers (cf. \cite{T1,T2,BT2,BR}). The corresponding Dubrovin--Frobenius manifold is $G$--graded. Its $\id$--graded submanifold is fixed by the $G$--invariants of the ``classical'' Saito theory and therefore called \textit{invariant sector}.

In this note we show that BKP and CKP hierarchies are constructed essentially via the Dubrovin--Frobenius manifolds of the Landau--Ginzburg orbifolds.
We consider the hierarchies of the orbifold Saito theory Dubrovin--Frobenius manifolds and show that restriction to the invariant sector gives a subhierarchy.

Our main theorem is the following.

\begin{theorem}\label{theorem: main}
    BKP hierarchy is the invariant sector hierarchy of the series of Landau--Ginzburg orbifolds $(A_{2N+1},\ZZ/2\ZZ)$.
    CKP hierarchy is the invariant sector hierarchy of the series of Landau--Ginzburg orbifolds $(D_{N},\ZZ/2\ZZ)$.
%     \begin{itemize}
%      \item BKP hiearchy is the invariant sector hierarchy of the series of Landau--Ginzburg orbifolds $(A_{2N-2},\ZZ/2\ZZ)$,
%      \item CKP hiearchy is the invariant sector hierarchy of the series of Landau--Ginzburg orbifolds $(D_{N+1},\ZZ/2\ZZ)$.
%     \end{itemize}
\end{theorem}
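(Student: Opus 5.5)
Since the statement compares hierarchies built from Dubrovin--Frobenius manifolds with Lax-type hierarchies, I would not work with the general Dubrovin--Zhang or Buryak machinery. I would instead use the explicit Lax description of the A- and D-type hierarchies coming from Saito theory, in the spirit of \cite{LRZ, B24, B25}.

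\textbf{Step 1 (A case, flat structure).} For $(A_{2N+1},\ZZ/2\ZZ)$ I would let the group act by $x_1\mapsto -x_1$, $x_2\mapsto x_2$. This preserves $f_{A_{2N+1}}=\frac{1}{2N+2}x_1^{2N+2}+x_2^2$ because $2N+2$ is even. The $\id$-sector is then spanned by the invariant monomials $x_1^{2k}$. The invariant sector is therefore the submanifold of the $A_{2N+1}$ Frobenius manifold where the unfolding
\[
\lambda(p)=\tfrac{1}{2N+2}p^{2N+2}+t_{2N}p^{2N}+\dots+t_0
\]
is even in $p$, that is, where all odd flat coordinates vanish. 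Three facts need checking here. First, the flat coordinates, computed as residues of $\lambda^{k/(2N+2)}$, respect the parity. Second, the flows of the Dubrovin--Zhang or Buryak hierarchy generated by odd densities are consistent with the constraint $t_{\mathrm{odd}}=0$, since the dispersionless density $\mathrm{res}\,\lambda^{(2m+1)/(2N+2)}$ is odd in the right sense. Third, the even flows then restrict to a closed subhierarchy. This is the precise meaning of ``invariant sector hierarchy'' I would establish: the restriction of the full orbifold hierarchy to the $G$-fixed locus, keeping only the flows that preserve it.

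\textbf{Step 2 (A case, dispersive identification).} Passing from $A_{2N+1}$ to the Gelfand--Dickey hierarchy with Lax operator $L^{2N+2}=\partial^{2N+2}+\dots$ (the known equivalence of Buryak's hierarchy with Gelfand--Dickey, cf. \cite{Bu}), the $\ZZ/2\ZZ$-invariance becomes the dispersive anti-involution $\partial\mapsto-\partial$ composed with formal adjoint. The constraint is $L^*=-\partial L\partial^{-1}$ for BKP, with only odd flows surviving in the KP normalization; the constraint for CKP is the different one $L^*=-L$. I would show that the invariant-sector restriction of the Gelfand--Dickey reduction corresponds exactly to imposing $L^*=-\partial L\partial^{-1}$ on the $(2N+2)$-reduction. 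Taking $N\to\infty$, as in the KP-from-$A_N$ construction, then gives BKP.

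\textbf{Step 3 (D case).} I would let $\ZZ/2\ZZ$ act by $x_2\mapsto -x_2$ on $f_{D_N}=\frac{1}{N-1}x_1^{N-1}+x_1x_2^2$. The invariant sector is then spanned by the classes of $1,x_1,\dots,x_1^{N-2}$, and the class of $x_2$ drops out. The D-type Lax operator of \cite{LRZ} has the form $L=\partial^{2N-2}+\sum u_i\partial^i+\partial^{-1}\rho\partial^{-1}\cdots$. Setting the anti-invariant coordinate to zero should kill the pseudo-differential tail in such a way that the result is the $(2N-2)$-reduction of CKP, characterised by $L^*=-L$ with odd flows only. Letting $N\to\infty$ gives CKP.

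\textbf{Main obstacle.} I expect the hardest part to be the dispersive check in Step 2 and the corresponding D-type computation: showing that the full dispersive reduction, not just its dispersionless limit, is compatible with and equivalent to the adjoint constraint. For this I would first establish it dispersionlessly via residues of $\lambda^{k/n}$ on the even locus. I would then use uniqueness of the dispersive deformation for semisimple Frobenius manifolds (Dubrovin--Zhang) to lift the statement. As a consistency check, the BKP and CKP results should give isomorphic invariant-sector Frobenius manifolds, as anticipated in the introduction.
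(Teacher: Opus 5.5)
\textbf{There is a genuine gap.} You work with the wrong ambient hierarchies, and as a result both of your dispersive identifications come out swapped.

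The invariant sector hierarchy is a subhierarchy of the hierarchy of the \emph{orbifold} Frobenius manifold. That manifold has twisted-sector coordinates $t_{g,\bullet}$ besides the invariant ones. You instead restrict the ordinary $A_{2N+1}$ and $D_N$ hierarchies. By the orbifold equivalences, the ambient manifolds are of the opposite type:
\begin{itemize}
\item $M_{(A_{2N+1},\ZZ/2\ZZ)}\cong M_{D_{N+2}}$, with invariant sector $\{t_{N+2}=0\}$;
\item $M_{(D_N,\ZZ/2\ZZ)}\cong M_{A_{2N-3}}$, with invariant sector $\{t_2=t_4=\dots=t_{2N-4}=0\}$.
\end{itemize}
Both invariant sectors are the same $B$-type Frobenius manifold. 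So your Step~1 (parity of the structure constants, closure of the subflows) is fine dispersionlessly. But the dispersive subhierarchy is inherited from the ambient one: the $A$-type restriction gives CKP, and the $D$-type restriction gives BKP.

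Concretely, the fixed locus of the dispersive involution on the Gelfand--Dickey hierarchy is $P^*=P$ for $P=L^{2N+2}$, i.e.\ $L^*=-L$. In the stabilized limit this is the Krichever--Zabrodin relation $(\tau^{\mathrm{CKP}})^2=\tau^{\mathrm{KP}}|_{t_{2m}=0}$ that the paper uses.

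Your Step~2 fails even at finite $N$. The constraint $L^*=-\p L\p^{-1}$ gives $P^*=\p P\p^{-1}$. This forces $P=Q\p+c$ with $c$ constant and $Q^*=-Q$, so $P$ carries only $N$ independent fields. The even locus of the $A_{2N+1}$ unfolding has $N+1$, since the free term (the unit direction) is invariant.

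Step~3 fails symmetrically. Setting the anti-invariant coordinate $\rho$ to zero in the $D$-type hierarchy keeps the positive flows of the reduced two-component BKP hierarchy, i.e.\ BKP. This is how the paper obtains the BKP half of the theorem, and it is not an $L^*=-L$ reduction.

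The repair in your last paragraph cannot catch this. BKP and CKP have the same dispersionless limit \eqref{bkp-dl}. A dispersionless computation on the invariant locus is therefore blind to exactly the distinction the theorem is about. No uniqueness principle applied to that limit can tell you which of two distinct dispersive hierarchies you land on. Moreover, Dubrovin--Zhang uniqueness is a finite-rank statement about the topological deformation, not about the stabilized systems here.

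The route that works is the paper's:
\begin{itemize}
\item Pass through the orbifold isomorphisms above.
\item Observe that the invariant sectors are natural submanifolds. The structure constants $c^\gamma_{\alpha\beta}$ with $\alpha,\beta$ invariant and $\gamma$ anti-invariant are odd under the group, hence vanish on the fixed locus.
\item Conclude by Proposition~\ref{prop: subhierarchy} that the invariant sector system is the ambient system with the anti-invariant Cauchy data set to zero. Dispersionlessly this is \eqref{eq: dlD-1}.
\item Get the dispersive statements from the positive flows of the reduced two-component BKP hierarchy for $(A_{2N+1},\ZZ/2\ZZ)$, and from Theorem~\ref{theorem: KZ on CKP} for $(D_N,\ZZ/2\ZZ)$.
\end{itemize}
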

Proof of this theorem is given in Section~\ref{section: proof}.
\\
\\
\indent Compared to \cite{LRZ} we work with Saito theory Landau--Ginzburg orbifolds while the authors of loc.cit. work with the FJR theory. From the point of view of mirror symmetry we work on the B side while \cite{LRZ} work on the A side. Another difference is that we construct the BKP and CKP hierarchies rather then the Kac--Wakimoto hierarchies.
We also distinguish the subhierarchies inside the bigger hierarchies as \cite{LRZ} do. However in our case this subhierarchies are given essentially by the invariant sector whereas Liu, Ruan, Zhang do the specific research to sort out the suitable flows.

\subsection{Acknowledgements}
The work of Alexey Basalaev was supported by the Russian Science Foundation (grant No. 24-11-00366).

The author is grateful to A.V. Zabrodin for many fruitful discussions.

\section{Dubrovin--Frobenius manifolds}
Assume $M$ to be an open full--dimensional subspace of $\CC^l$. We say that it's endowed with a structure of Dubrovin--Frobenius manifold if there is a regular function $\F = \F(t_1,\dots,t_l)$ on $M$, s.t. the following conditions hold (cf. \cite{D96}).

\begin{itemize}
	\item
There is a distinguished variable $t_1$, such that:
\[
    \frac{\p \F}{\p t_1} = \frac{1}{2} \sum_{\alpha,\beta = 1}^l \eta_{\alpha,\beta} t_\alpha t_\beta,
\]
and $\eta_{\alpha,\beta}$ are components of a non-degenerate bilinear form $\eta$ (which does not depend on $t_\bullet$). In what follows denote by $\eta^{\alpha,\beta}$ the components of $\eta^{-1}$.

\item The function $\F$ satisfies a large system of PDEs called the WDVV equations:
\[
\sum_{\mu,\nu = 1}^l \frac{\p^3 \F}{\p t_\alpha \p t_\beta \p t_\mu} \eta^{\mu,\nu} \frac{\p^3 \F}{\p t_\nu \p t_\gamma \p t_\sigma}
=
\sum_{\mu,\nu = 1}^l \frac{\p^3 \F}{\p t_\alpha \p t_\gamma \p t_\mu} \eta^{\mu,\nu} \frac{\p^3 \F}{\p t_\nu \p t_\beta \p t_\sigma},
\]
which should hold for every given $1\leq\alpha,\beta,\gamma,\sigma\leq l$.

\item There is a vector field $E$ called the \textit{Euler vector field}, such that modulo quadratic terms in $t_\bullet$ we have $E \cdot \F = (3-\delta) \F$ for some fixed complex number $\delta$. We will assume $E$ to have the following simple form
\[
    E = \sum_{i=1}^l d_i t_i \frac{\p}{\p t_i}
\]
for some fixed numbers $d_1,\dots,d_l$.
\end{itemize}

Given such a data $(M,\F,E)$ one can endow every tangent space $T_pM$ with a structure of commutative associative product $\circ$ (depending on $\bt$) defined as follows:
\[
    \frac{\p}{\p t_\alpha} \circ \frac{\p}{\p t_\beta} = \sum_{\delta,\gamma=1}^l\frac{\p^3 \F}{\p t_\alpha \p t_\beta \p t_\delta} \eta^{\delta\gamma} \frac{\p}{\p t_\delta}.
\]
The unit of this product is the vector field $e = \frac{\p}{\p t_1}$.
It follows that $\eta(a \circ b,c) = \eta(a,b \circ c)$ for any vector fields $a,b,c$.

\subsection{A and D type Dubrovin--Frobenius manifolds}\label{sec:frobstruc}
Let $W$ stand either for $A_N$ of $D_N$.
The $A_N$ and $D_N$ type singularities are defined via the following polynomials in $x$ and $y$:
\[
    f_{A_N} = \frac{x^{N+1}}{N+1} +y^2 , \quad f_{D_N} = \frac{x^{N-1}}{N-1} + xy^2.
\]
One associates to them the so-called unfoldings $\Lambda_W: \CC^2\times\CC^N \to \CC$
\[
    \Lambda_{A_N} = \frac{x^{N+1}}{N+1} +y^2 + \sum_{k=1}^Nv_k x^{k-1}, \quad \Lambda_{D_N} = \frac{x^{N-1}}{N-1} + xy^2 + \sum_{k=1}^{N-1}v_k x^{k-1} + v_N y,
\]
that depend on additional parameters $v = (v_1,\dots,v_N)\in M_W := \CC^N$. %varying in $M_W := \CC^N$.

To introduce the Dubrovin--Frobenius manifold structure on $M_W$ consider for every fixed $v \in M_W$ the following quotient-ring:
\[
    \A_v := \CC[x,y]\, \Big/ \left( \frac{\p \Lambda_W}{\p x}, \frac{\p \Lambda_W}{\p y} \right).
\]
It's endowed with the quotient-ring product structure, and the classical singularity theory arguments assure that $\A_v$ is an $N$--dimensional $\CC$--vector space. Let $c_{ab}^s(v)$ stand for the structure constants of this product in the basis $[\p \Lambda_W/\p v_1],\dots,[\p \Lambda_W/\p v_N]$, namely,
\begin{align*}
    & A_N: \quad \frac{\p \Lambda_W}{\p v_k} = x^{k-1}, \ 1 \le k \le N,
    \\
    & D_N: \quad \frac{\p \Lambda_W}{\p v_k} = x^{k-1}, \ 1 \le k \le N-1, \quad \frac{\p \Lambda_W}{\p v_N} = y.
\end{align*}

The product $\circ: T_vM \otimes T_vM \to T_vM \otimes \CC[v_1,\dots,v_N]$ is now defined by
\[
    \frac{\p}{\p v_a} \circ \frac{\p}{\p v_b} := \sum_{k=1}^N c_{ab}^k(v) \frac{\p}{\p v_k}.
\]
Obviously, $\p/\p v_1$ is the unit of this product.
In particular, we have for $A_N$
\begin{align}
    \frac{\p}{\p v_a} \circ \frac{\p}{\p v_b} = \frac{\p}{\p v_{a+b-1}} \quad \forall a+b \le N+1,
    \label{eq: A_N simple str constants in v}
    \\
    \frac{\p}{\p v_a} \circ \frac{\p}{\p v_{N+2-a}} = - \sum_{k=2}^N (k-1)v_k \frac{\p}{\p v_{k-1}}.
\end{align}

Introduce the non-degenerate $\CC[v]$--bilinear pairing $\eta: T_vM \otimes T_vM \to \CC[v_1,\dots,v_N]$ by
\begin{align*}
    &\eta(\frac{\p}{\p v_a},\frac{\p}{\p v_b}) := c_{a,b}^N, \quad W = A_N,
    \\
    &\eta(\frac{\p}{\p v_a},\frac{\p}{\p v_b}) := c_{a,b}^{N-1}, \quad W = D_N.
\end{align*}
The pairing we introduce is in fact the well-known residue pairing.

\begin{theorem}[cf. \cite{D96,S1,ST}]
    The data $(M,\circ,\eta)$ is a Dubrovin--Frobenius manifold. In particular, there is a choice of the coordinates $t_\alpha = t_\alpha(v)$ (which are called the flat coordinates), s.t. in the basis $\p/\p t_1, \dots, \p/\p t_N$ we have
    \begin{itemize}
     \item the pairing $\eta$ is constant,
     \item there is a Frobenius potential $\F_W = \F_W(t_1,\dots,t_N)$, s.t.
     \[
        \frac{\p}{\p t_\alpha} \circ \frac{\p}{\p t_\beta} = \sum_{\gamma,\delta=1}^N \frac{\p^3 \F_W}{\p t_\alpha \p t_\beta \p t_\gamma} \eta^{\gamma \delta} \frac{\p}{\p t_\delta}.
     \]
    \end{itemize}
\end{theorem}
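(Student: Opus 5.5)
The plan is to verify the three defining properties of a Dubrovin--Frobenius manifold for $(M_W,\circ,\eta)$: commutative associative product with unit, invariant constant-in-flat-coordinates metric, potential with Euler field. I would use the standard Saito theory argument, specialised to the two-variable unfoldings $\Lambda_{A_N}$ and $\Lambda_{D_N}$.

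\emph{Product and unit.} The map $T_vM_W\to\A_v$, $\p/\p v_k\mapsto[\p\Lambda_W/\p v_k]$, is a linear isomorphism, because the classes $[x^{k-1}]$ (and $[y]$ for $D_N$) form a basis of $\A_v$ for every $v$. This follows from the weighted homogeneity of $f_W$ and the fact that the unfolding only adds lower-weight terms. Transporting the quotient-ring product gives a commutative, associative product $\circ$ with unit $\p/\p v_1$, whose structure constants $c_{ab}^k(v)$ are polynomial in $v$.

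\emph{Pairing.} I would identify $\eta$ with the residue pairing
\[
\eta(a,b)=\mathrm{Res}\,\frac{a\,b\,dx\wedge dy}{\p_x\Lambda_W\,\p_y\Lambda_W},
\]
up to a constant. In the monomial basis this residue picks out the coefficient of the top-degree basis element, namely $x^{N-1}$ for $A_N$ and $x^{N-2}$ for $D_N$. Hence the pairing is nondegenerate and invariant, $\eta(a\circ b,c)=\eta(a,b\circ c)$.

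\emph{Flat coordinates and potential.} This is the main step. The residue pairing is flat with respect to the Gauss--Manin type connection, because the primitive form $dx\wedge dy$ is primitive for these simple singularities. Concretely, I would construct the flat coordinates explicitly. For $A_N$, define $t_\alpha$ via the Puiseux-type expansion: set $L=(x^{N+1}+\dots)^{1/(N+1)}$ and take $t_\alpha\propto\mathrm{Res}_{x=\infty}L^{\alpha}\,dx$. One then checks that $\eta(\p_{t_\alpha},\p_{t_\beta})=\delta_{\alpha+\beta,N+1}$. For $D_N$, I would reduce to the known Landau--Ginzburg superpotential description of $D_N$ and perform the analogous residue computation, with $t_N$ proportional to $v_N$.

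Next, I would show that $c_{\alpha\beta\gamma}=\eta(\p_\alpha\circ\p_\beta,\p_\gamma)$ is totally symmetric and that $\p_\delta c_{\alpha\beta\gamma}$ is symmetric in $\delta,\alpha$. The latter follows from writing $c_{\alpha\beta\gamma}$ as a residue of $\p_\alpha\Lambda\,\p_\beta\Lambda\,\p_\gamma\Lambda$ and using $\p_\alpha\p_\beta\Lambda\in$ Jacobian ideal $+\,\p_x(\cdot)$, which is the flatness property. The Poincaré lemma then produces $\F_W$. WDVV is just associativity, and $\p_{t_1}\F_W$ is quadratic because $\p_{t_1}=\p_{v_1}$ is the unit.

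\emph{Euler field.} Weighted homogeneity of $\Lambda_W$, after assigning weights to $x,y,v_k$, gives the Euler field $E=\sum d_it_i\p_{t_i}$ with $E\F_W=(3-\delta)\F_W$ up to quadratic terms. The obstacle I expect is the flatness and closedness step, which is essentially the existence of a primitive form. For these cases I would cite \cite{S1,ST} for it, and otherwise verify it by the explicit residue formulas.
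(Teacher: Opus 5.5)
Your outline is the standard Saito-theory argument and agrees with the paper. The paper gives no proof of its own: it cites \cite{D96,S1,ST}, and \cite{NY} for explicit flat coordinates where you use Dubrovin's fractional-power residues, and that is exactly the primitive-form input you also defer to. Two slips need fixing: $\mathrm{Res}_{x=\infty}L^{\alpha}dx$ has weight $(\alpha+1)/(N+1)$, so matching the paper's labelling (weight of $t_\alpha$ equal to $(N+2-\alpha)/(N+1)$, unit $\p_{t_1}$) requires $L^{N+1-\alpha}$; and ``$\p_\alpha\p_\beta\Lambda\in$ Jacobian ideal $+\,\p_x(\cdot)$'' is vacuous as written, since every polynomial is an $x$-derivative, whereas the real flatness condition is that one can choose $h_{\alpha\beta},k_{\alpha\beta}$ with $\p_\alpha\Lambda\,\p_\beta\Lambda=\sum_\gamma c^\gamma_{\alpha\beta}\p_\gamma\Lambda+h_{\alpha\beta}\p_x\Lambda+k_{\alpha\beta}\p_y\Lambda$ and simultaneously $\p_\alpha\p_\beta\Lambda=\p_x h_{\alpha\beta}+\p_y k_{\alpha\beta}$.
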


For the cases of $A_N$ and $D_N$ singularities, the flat coordinates of the theorem above were investigated by Noumi and Yamada in \cite{NY}.

\subsection{$A_N$ and $D_N$ Frobenius potentials}
% Let $W$ be either $A_N$ or $D_N$.
The potential $\F_W$ is a polynomial in $t_1,\dots,t_N$ with rational coefficients subject to the quasi-homogeneity condition $E_W \cdot \F_W = (3 - \delta_W) \F_W$ with
\begin{align*}
    & E_{A_N} = \sum_{\alpha=1}^{N}\frac{N+2-\alpha}{N+1} t_\alpha \frac{\p}{\p t_\alpha}, \quad & \delta_{A_N} = \frac{N-1}{N+1},
    \\
    & E_{D_N} = \sum_{\alpha=1}^{N-1}\frac{N-\alpha}{N-1} t_\alpha \frac{\p}{\p t_\alpha} + \frac{N}{2(N-1)} t_N \frac{\p}{\p t_N}, \quad &\delta_{D_N} = \frac{N-2}{N-1}.
\end{align*}
The pairing $\eta$ reads
\begin{align*}
    & \eta_{\alpha,\beta} = \delta_{\alpha+\beta,N+1} \quad & \text{ for } W = A_N,
    \\
    & \eta_{\alpha,\beta} =
    \begin{cases}
    1 \quad \text{when} \quad \alpha = \beta = N,
    \\
    \delta_{\alpha+\beta,N} \quad \text{otherwise}.
    \end{cases} \quad & \text{ for } W = D_N.
\end{align*}
We see that for all these $W$ for any $\alpha \in \{1,\dots,N\}$ there exists a unique integer $\bar\alpha \in \{1,\dots,N\}$ such that $\eta_{\alpha,\bar\alpha}= 1$.

For $W = A_N, D_N$ Noumi-Yamada gave the formulae for the potential $\F_W(t_1,\dots,t_N)$ in the following way. Consider the functions $\psi^{(r)}_\gamma \in \QQ[v_1,\dots,v_N]$ depending of the unfolding variables $v_k$ as above. Then for all $1 \le \alpha \le N$ we have:
\begin{equation}\label{eq: FW definition}
     \frac{\p \F_W}{\p t_\alpha} = \psi^{(2)}_{\overline \alpha}(t_1,\dots,t_N),
    \quad
    t_\alpha = \psi_\alpha^{(1)}(v_1,\dots,v_N).
\end{equation}

% It is only reasonable to consider the potential $F_W$ in flat coordinates $t_k$ and therefore it is important to invert the above formula of \cite{NY} in order to express $v_k = v_k(t)$.

\subsection{\texorpdfstring{$A_N$}{AN} case}
We have $\overline \alpha = N+1-\alpha$ and
\begin{align*}
    \psi_\gamma^{(r)}(v) &:= \sum_{\substack{\alpha_1,\dots,\alpha_N \ge 0 \\ \sum_{k=1}^{N} (N+2-k) \alpha_k = r(N+1) +1 - \gamma}} \left(-1\right)^{|\alpha|-r} \prod_{k=0}^{|\alpha|-1-r}(\gamma + k(N+1)) \prod _{k=1}^{N} \frac{v_k^{\alpha _k}}{\alpha _k!},
\end{align*}
where $|\alpha| = \sum_{k=1}^{N} \alpha_k$.

% The inverted formulae were given by Buryak in \cite{B1} from the study of open Gromov-Witten theories:
% \begin{align}\label{eq: An essential coordinate via flat}
%      v_\gamma &=  \sum_{\substack{\alpha_1,\dots,\alpha_N \ge 0 \\ \sum_{k=1}^{N} (N+2-k) \alpha_k = N+2 - \gamma}} \frac{(|\alpha|+\gamma-2)!}{(\gamma-1)!} \prod _{k=1}^{N} \frac{t_k^{\alpha _k}}{\alpha _k!}.
% \end{align}
% Note that the condition $\sum_{k=1}^{N} (N+2-k) \alpha_k = N+2 - \gamma$ precisely ensures that $v_\gamma$ is quasi-homogeneous (w.r.t. the Euler field $E_{A_N}$) and its weight is equal to the weight of $t_\gamma$.

\subsection{\texorpdfstring{$D_N$}{DN} case}
We have
\[
    \overline \alpha = N-\alpha, \ 1\le\alpha \le N-1, \quad \overline N = N.
\]
and
\begin{align*}
    \psi^{(1)}_{\gamma} &= \sum_{\substack{\alpha_1,\dots,\alpha_{N-1} \ge 0 \\ \sum_{k=1}^{N-1} (N-k)\alpha_k = N- \gamma}} \left(-1\right)^{|\alpha|-1} \prod _{k=0}^{|\alpha|-2} (2 \gamma -1 +2 k (N-1)) \prod_{k=1}^{N-1} \frac{v_k^{\alpha_k}}{\alpha_k!}, \quad 1 \le \gamma \le N-1,
    \\
    \psi^{(1)}_{N} &= v_N.
\end{align*}
where $|\alpha| = \sum_{k=1}^{N-1} \alpha_k$.

In order to introduce $\psi^{(2)}_\gamma$ consider the following combinatorial coefficients.
\begin{align*}
    A^{(1)}_{\gamma,\alpha} &:= \left(-1\right)^{|\alpha|-2} \prod _{k=0}^{|\alpha|-3} (2\gamma -1 + 2 k (N-1)),
    \\
    A^{(2)}_{\gamma,\alpha} &:= \left(-1\right)^{|\alpha|-1} \prod _{k=0}^{|\alpha|-2} (2 \gamma -1 + 2 k (N-1)), \qquad 1 \le \gamma \le N-2,
    \\
    A^{(2)}_{N-1,\alpha} &:= 2.
\end{align*}
Then for any $1 \le \gamma \le N-1$ we have
\begin{align*}
    \psi_\gamma^{(2)}(v) &:= \sum_{\substack{\alpha_1,\dots,\alpha_{N-1} \ge 0 \\ \sum_{k=1}^{N-1} (N-k)\alpha_k = 2(N-1) +1- \gamma}} A^{(1)}_{\gamma,\alpha} \prod _{k=1}^{n-1} \frac{v_k^{\alpha _k}}{\alpha _k!}
%      \\
%      &
    + \sum_{\substack{\alpha_1,\dots,\alpha_{N-1} \ge 0 \\ \sum_{k=1}^{N-1} (N-k)\alpha_k = N-1-\gamma}} \frac{A^{(2)}_{\gamma,\alpha}}{2} \prod _{k=1}^{n-1} \frac{v_k^{\alpha _k}}{\alpha _k!} \frac{v_N^2}{2},
    \\
    \psi_{N}^{(2)}(v) &:= v_1v_N,
\end{align*}

\subsection{Dubrovin--Frobenius submanifold}
This topic was investigated thoroughly by I.~Strachan in \cite{St}. Consider a Dubrovin--Frobenius manifold $(M, \circ, \eta)$. Let $\iota: M' \hookrightarrow M$ be a submanifold. Then $M'$ has an induced bilinear form $\eta'$ on it. Moreover, the pairing $\eta$ gives us the orthogonal projector $\mathrm{pr}: TM \to TM'$ that allows one to introduce the product $\circ'$ for $M'$
\[
    a \circ' b := \mathrm{pr} (a \circ b), \quad \forall a,b \in T_p M' \subset T_p M.
\]
It follows immediately that $(T_p M', \circ', \eta')$ satisfies the Frobenius algebra property for any $p \in M'$. Denote also $e':= \mathrm{pr}(e)$ and  $E':= \mathrm{pr}(E)$. Note that these vectors can even turn out to be zero.

We will call $M'$ a Dubrovin--Frobenius submanifold if $(M',\circ',\eta')$ is a Dubrovin--Frobenius manifold by itself. In particular, this assumes $\eta'$ to be non--degenerate and the product $\circ'$ to have a Frobenius potential (see \cite{St} for the details).

Assume also the product $\circ'$ above to be such that $a \circ' b = a \circ b$ in $T_pM$ for any $p \in M'$. This is equivalent to the property $((a \circ b) \mid_{M'} )^\perp = 0$. The Dubrovin--Frobenius submanifolds satisfying this property will be called \textit{natural}.

Assume Dubrovin--Frobenius submanifold $M'$ to be obtained by setting $t_k = 0$ for all $k \in \R$ with $\R \subset \lbrace 1,\dots,N \rbrace$. If $c_{\alpha\beta}^\gamma$ are structure constants of $\circ$, the property of being natural for $M'$ means
\[
    \frac{\p}{\p t_\alpha} \circ' \frac{\p}{\p t_\beta} = \sum_{\gamma \not\in \R} \left( c_{\alpha\beta}^\gamma \mid_{t_k = 0 \ \forall k \in \R} \right) \frac{\p}{\p t_\gamma}.
\]

The following are two major examples of the natural Dubrovin--Frobenius submanifolds.

\subsection{B type Dubrovin--Frobenius manifold}\label{section: BN}
It was shown by Zuber in \cite{Zu} that the polynomial $F_{B_N}$ defined by
\begin{equation}\label{eq: Bn via An}
    \F_{B_N}(t_1,\dots,t_N) := \F_{A_{2N-1}}(t_1,0,t_2,0,t_3,\dots,t_N)
\end{equation}
introduces the structure of a Dubrovin--Frobenius on $M_{B_N} := \CC^N$.
Its pairing satisfies $\eta_{\alpha,\beta} = \delta_{\alpha+\beta,N+1}$ and
\[
    E_{B_N} = \sum_{\alpha=1}^{N}\frac{N +1 -\alpha}{N} t_\alpha \frac{\p}{\p t_\alpha}, \quad \delta_{B_N} = \frac{N-1}{N}.
\]

One notes immediately that $M_{B_N}$ is a Dubrovin--Frobenius submanifold in $M_{A_{2N-1}}$ with $\iota (t_1,\dots,t_N) = (t_1,0,t_2,0,t_3,\dots,t_N)$. In the other words $M_{B_N}$ is given inside $M_{A_{2N-1}}$ by setting all even--indexed coordinates to zero.

It was also observed in \cite[Proposition~4.5]{BDbN} that the following equality holds.
\begin{equation}\label{eq: Bn via Dn}
    \F_{B_N}(t_1,\dots,t_N) = \F_{D_{N+1}}(t_1,t_2, \dots, t_{N},0).
\end{equation}

This gives us that $M_{B_N}$ is a Dubrovin--Frobenius submanifold in $M_{D_{N+1}}$ with $\iota (t_1,\dots,t_N) = (t_1,\dots,t_N,0)$. In the other words $M_{B_N}$ is given inside $M_{D_{N+1}}$ by setting $t_{N+1} = 0$.

\begin{proposition}\label{prop: natural submanifolds}
    $M_{B_N}$ is a natural Dubrovin--Frobenius submanifold in $M_{A_{2N-1}}$ and $M_{D_{N+1}}$.
\end{proposition}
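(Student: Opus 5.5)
The plan is to reformulate naturality through the potential. Let $M'$ be cut out by $t_k=0$ for $k\in\R$. Naturality means $c_{\alpha\beta}^\gamma|_{M'}=0$ whenever $\alpha,\beta\notin\R$ and $\gamma\in\R$. Since $c_{\alpha\beta}^\gamma=\sum_\delta \p_\alpha\p_\beta\p_\delta\F\,\eta^{\delta\gamma}$ and $\eta$ pairs each index with a unique $\bar\gamma$, the condition becomes
\[
\frac{\p^3 \F}{\p t_\alpha \p t_\beta \p t_{\bar\gamma}}\Big|_{M'}=0,\qquad \alpha,\beta\notin\R,\ \gamma\in\R .
\]
For $A_{2N-1}$ we have $\R$ equal to the even indices, and $\bar\gamma=2N-\gamma$ is again even. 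For $D_{N+1}$ we have $\R=\{N+1\}$ and $\overline{N+1}=N+1$. So in both cases $\R$ is closed under the bar involution, and it suffices to show that $\p_\alpha\p_\beta\p_\rho\F|_{M'}=0$ for all $\alpha,\beta\notin\R$ and $\rho\in\R$.

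For this I would use a $\ZZ/2\ZZ$ symmetry of the potential and show that $\F$ is invariant under $t_\rho\mapsto -t_\rho$ for $\rho\in\R$. Then every monomial of $\F$ contains an even total power of the variables in $\R$, and a third derivative with exactly one derivative in $\R$, restricted to $M'$, picks up only monomials of odd total $\R$-degree. Those vanish.

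For $D_{N+1}$ the symmetry is $y\mapsto -y$. The unfolding $\Lambda_{D_{N+1}}$ is invariant under $(y,v_{N+1})\mapsto(-y,-v_{N+1})$, which induces an automorphism of the Frobenius structure fixing $v_1,\dots,v_N$. Concretely, the Noumi--Yamada formulas show that $t_{N+1}=v_{N+1}$, that the $t_\gamma$ with $\gamma\le N$ do not involve $v_{N+1}$, and that the $\psi^{(2)}_\gamma$ contain $v_{N+1}$ only through $v_{N+1}^2$ for $\gamma\le N$ and linearly for $\gamma=N+1$. Hence $\p\F/\p t_\alpha$ is even in $t_{N+1}$ for $\alpha\le N$, and the claim follows.

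For $A_{2N-1}$ the symmetry is $x\mapsto -x$. The unfolding $\Lambda=\frac{x^{2N}}{2N}+y^2+\sum v_kx^{k-1}$ is invariant under $x\mapsto -x$ together with $v_k\mapsto(-1)^{k-1}v_k$. From the formula for $\psi^{(1)}$, the constraint $\sum(2N+1-k)\alpha_k=2N+1-\gamma$ gives $\sum k\alpha_k\equiv \gamma+|\alpha|\cdot(\text{odd})$. A parity count then shows that $t_\gamma$ transforms with sign $(-1)^{\gamma-1}$, and similarly that $\psi^{(2)}_{\bar\alpha}$ transforms with sign $(-1)^{\alpha-1}$. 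The same conclusion can be reached via quasi-homogeneity: the weight of a monomial $\prod t_k^{m_k}$ in $\F$ is $3-\delta$, and reducing mod $2$ after scaling by $2N$ forces $\sum (k-1)m_k$ to be even. Therefore $\F(t)=\F(\epsilon t)$ with $\epsilon_k=(-1)^{k-1}$, i.e.\ $\F$ is even in the even-indexed variables. I would actually prefer this weight argument, since it avoids the explicit formulas: each monomial $\prod t_k^{m_k}$ has $\sum (2N+1-k)m_k=2N(3-\delta)=2N+4$, and reducing mod $2$ gives $\sum (k-1)m_k\equiv 0$. For $D_{N+1}$ the analogous weight count only controls the weights modulo the lattice, so there I would rely on the $y\mapsto -y$ argument instead.

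The main obstacle is justifying the parity rigorously: either checking that the weight-parity argument is exact, which requires the quasi-homogeneity to hold exactly and not merely modulo quadratic terms (for these cubic-and-higher potentials the quadratic ambiguity is irrelevant to third derivatives), or carefully tracking the sign behaviour of the flat coordinates under the symmetry. Once evenness is established, the vanishing of the mixed third derivatives on $M'$ is immediate.
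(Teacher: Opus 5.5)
Your proposal is correct and follows essentially the same route as the paper. A $\ZZ/2\ZZ$ sign symmetry of the potential (negating the even-indexed $t_k$ for $A_{2N-1}$, and $t_{N+1}$ for $D_{N+1}$) makes the structure constants $c^{\gamma}_{\alpha\beta}$ with $\alpha,\beta\notin\R$, $\gamma\in\R$ odd, so they vanish on the fixed locus; invariance of $\F$ is read off from the Noumi--Yamada formulas, which is exactly your $D_{N+1}$ argument. Your quasi-homogeneity shortcut for $A_{2N-1}$ is a valid alternative to those formulas, but the correct value is $2N(3-\delta_{A_{2N-1}})=4N+2$, not $2N+4$ (which is the $A_N$ value $(N+1)(3-\delta_{A_N})$ with $N$ not replaced by $2N-1$); since both numbers are even, your parity conclusion is unaffected.
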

\begin{proof}
    Consider $\iota: M_{B_N} \hookrightarrow M_{A_{2N-1}}$. Let $c_{\alpha\beta}^\gamma = c_{\alpha\beta}^\gamma(\bt)$ stand for the structure constants of $M_{A_{2N-1}}$. We need to show that the following property holds for all $1 \le a,b \le N$.
    \[
        \left( \frac{\p}{\p t_{2a-1}} \circ \frac{\p}{\p t_{2b-1}} \right) \mid_{t_{2k} = 0} \ = \ \sum_{d=1}^N c_{2a-1,2b-1}^{2d-1}  \mid_{t_{2k} = 0} \frac{\p}{\p t_{2d-1}}.
    \]
    Introduce the $\ZZ/2\ZZ$--action on $M_{A_{2N-1}}$ by
    \[
        h: (t_1,t_2,t_3,t_4,t_5,\dots,t_{2N-2},t_{2N-1}) \mapsto (t_1,-t_2,t_3,-t_4,t_5,\dots,-t_{2N-2},t_{2N-1}).
    \]
    It follows immediately from Eq.~\eqref{eq: FW definition} that $\F_{A_{2N-1}}$ is preserved under this action. Then
    \[
        h (c_{2a-1,2b-1}^k) = (-1)^{k-1} c_{2a-1,2b-1}^k.
    \]
    In particular, $c_{2a-1,2b-1}^{2d}$ are odd with respect to this $\ZZ/2\ZZ$--action. This means that it depends nontrivially on $t_{2k}$ and vanishes by $\iota$. This completes the first part.

    To prove that $\iota: M_{B_N} \hookrightarrow M_{D_{N+1}}$ is a natural Dubrovin--Frobenius manifold consider the $\ZZ/2\ZZ$--action on $M_{D_{N+1}}$
    \[
        h: (t_1,t_2,t_3, \dots,t_{N},t_{N+1}) \mapsto (t_1,t_2,t_3,\dots,t_{N},-t_{N+1}).
    \]
    It follows again from Eq.~\eqref{eq: FW definition} that $\F_{D_{N+1}}$ is preserved under this action and the proof copies the arguments above.
\end{proof}

The proof above makes use of some group action arguments. In fact, this group action is not just instrumental because it comes from the orbifold Saito theory ideas. The proposition above follows from the deeper results on the Gauss--Manin connection of orbifold Saito theory of A and D type singularities (cf. \cite{BR}).

%%%%%%%%%%%%%%%%%%%%%%%%%%%%%%%%%%%%%%%%%%%%%%%%%%%%%%%%%%%%%%%%%%%%%%%%%%%%%%%%%%%%%%%%%%%%%%%%%%%%%%%%
%%%%%%%%%%%%%%%%%%%%%%%%%%%%%%%%%%%%%%%%%%%%%%%%%%%%%%%%%%%%%%%%%%%%%%%%%%%%%%%%%%%%%%%%%%%%%%%%%%%%%%%%
%%%%%%%%%%%%%%%%%%%%%%%%%%%%%%%%%%%%%%%%%%%%%%%%%%%%%%%%%%%%%%%%%%%%%%%%%%%%%%%%%%%%%%%%%%%%%%%%%%%%%%%%
%%%%%%%%%%%%%%%%%%%%%%%%%%%%%%%%%%%%%%%%%%%%%%%%%%%%%%%%%%%%%%%%%%%%%%%%%%%%%%%%%%%%%%%%%%%%%%%%%%%%%%%%

\section{Integrable systems}

\subsection{Stabilizing potentials}
Let $\lbrace \F_N \rbrace_{N \ge N_{min}}$ be an infinite series of $N$--dimensional Frobenius potentials.
Assume $\F_N \in \CC[[t_1,\dots,t_N]]$. We will say that this series \textit{stabilize} if the following holds.

For any $N \ge N_{min}$, all indices $1 \le \alpha,\beta \le N$, satisfying the \textit{index stabilization condition} $\alpha + \beta \ll N$ and any $L > N$ the the following equality holds in $\CC[[s_1,\dots,s_L]]$.
\[
    \frac{\p^2 \F_N}{\p t_\alpha \p t_\beta} \mid_{ t_\alpha = \eta^{\alpha \gamma} s_\gamma}
    =
    \frac{\p^2 \F_L}{\p t_\alpha \p t_\beta} \mid_{ t_\alpha = \eta^{\alpha \gamma} s_\gamma}
\]
Note that the substitution $t_\alpha = \eta^{\alpha \gamma} s_\gamma$ on the both sides is taken with its own pairing $\eta$, different for the two sides.

The index stabilization condition can be made more precise in the particular examples.
It was proven in Theorem~4.1 of \cite{BDbN} that $\F_N = \F_{A_N}$ with $N_{min} = 1$ and index stabilization condition $\alpha+\beta \le N$ stabilize.

Another stabilizing series of potentials is given by the D type Dubrovin--Frobenius manifolds.
According to Theorem~4.9 of loc.cit. the series $\F_N = \F_{D_N}$ stabilize with $N_{min}=4$ and index stabilization condition $\alpha+\beta \le N-1$ if both $\alpha,\beta < N$ and $\alpha \le N-1$ if $\beta = N$.

\subsection{Dispersionless system}\label{section: int sys defintion}
To any stabilizing series $\lbrace \F_N \rbrace_{N \ge N_{min}}$ as above associate the system of PDEs on the function $f = f(t_1,t_2,t_3,\dots)$.

\begin{equation}\label{eq: main PDE}
    \p_\alpha \p_\beta f = \left( \frac{\p^2 \F_N}{\p t_\alpha \p t_\beta} \right) \mid_{t_\gamma = \eta^{\gamma \delta} \p_1 \p_\delta f},
\end{equation}
where we denote $\p_\alpha := \p / \p t_\alpha$ and the number $N$ of the right hand side is taken such that the index stabilization condition for $\alpha,\beta$ and $N$ holds.

Note that the right hand side of Eq.~\eqref{eq: main PDE} is a complex coefficients power series in $\p_1\p_\bullet f$.
This system of equations expresses the second order derivatives of $f$ with respect to the higher index variables via the set $\lbrace \p_1\p_kf \rbrace_{k=1}^\infty$ that should be understood as a {\it Cauchy data}.

The compatibility of this system of PDEs follows from WDVV equation on the potentials $\F_N$ (cf. Proposition~2.1 of \cite{BDbN} and proof of Theorem 1.2 in \cite{B24}).

It was shown in \cite{BDbN} that Eq.~\eqref{eq: main PDE} for the stabilizing system of A type Dubrovin--Frobenius manifolds is equivalent to dispersionless KP hierarchy. To be more precise Eq.~\eqref{eq: main PDE} coincides with the Fay form of dispersionless KP hierarchy after some rescaling.

For the stabilizing system of D type Dubrovin--Frobenius manifolds Eq.~\eqref{eq: main PDE}  is equivalent to dispersionless limit of 1--component reduced 2--component BKP hierarchy. To be more precise Eq.~\eqref{eq: main PDE} has two types of flows that originate from the specific structure of the potential. The first class of flows coincides with the Fay form of the dispersionless BKP hierarchy after some rescaling. The second class of flows coincides with the Fay form of the reduced dispersionless BKP hierarchy (cf. Section~7 of loc.cit.).
We will comment on this more in what follows.
% {\color{red}
\subsection{Dispersionfull systems}
There are several ways to associate to a semisimple Dubrovin--Frobenius manifold a dispersionfull hierarchy (cf. \cite{DZ,Bu}). In particular, the approach of Dubrovin and Zhang is to construct a dispersionless hierarchy first and extend it further to the dispersionfull one.

Our primary interest in this paper concerns A and D type Dubrovin--Frobenius manifolds.
In these special cases there are specific constructions of the dispersionfull hierarchies \cite{FGM10, GM05, G03, MT}.
% We will comment on them later on.
% }
\subsection{Dubrovin--Frobenius submanifolds and subflows}

Let $\lbrace \F_N \rbrace$ be a stabilizing series of potentials. Assume $\F_N$ to introduce the structure of Dubrovin--Frobenius manifold on $M_N$. Assume also $M'_N \hookrightarrow M_N$ to be a natural Dubrovin--Frobenius submanifold, given by
\[
    M_N' = \lbrace (t_1,\dots,t_N) \in M_N \ | \ t_\alpha = 0 \ \forall \alpha \in \R_N \rbrace
\]
for some index set $\R_N \subset \lbrace 1,\dots, N \rbrace$. Assume also $\R_{N} \subseteq \R_{N+1}$ for all $N \ge N_{min}$.

In particular, for $B_N \hookrightarrow A_{2N-1}$ we have $\R_{2N-1} = \lbrace 2,4,\dots, 2N-2 \rbrace$ and for $B_N \hookrightarrow D_{N+1}$ we have $\R_{N+1} = \lbrace N+1 \rbrace$.

\begin{proposition}\label{prop: subhierarchy}
 Let the series $\lbrace \F'_N \rbrace$ of Frobenius potentials stabilize. Then the system Eq.~\eqref{eq: main PDE} for it are subflows of the system Eq.~\eqref{eq: main PDE} for the series $\lbrace \F_N \rbrace$ obtained by taking $\p_1\p_k f = 0$ for all $k \in \cup_{N} \R_N$.
\end{proposition}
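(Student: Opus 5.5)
We need to show: if $f$ solves the system Eq.~\eqref{eq: main PDE} for $\{\F_N\}$ and satisfies $\p_1\p_k f = 0$ for all $k\in\R:=\cup_N\R_N$, then the remaining equations, for indices $\alpha,\beta\notin\R$, coincide with Eq.~\eqref{eq: main PDE} for $\{\F'_N\}$ after relabeling the variables $t_\alpha$, $\alpha\notin\R$, as the coordinates of $M'_N$. Conversely, the constraint has to be preserved by the flows for $\alpha \notin \R$, so that these equations genuinely form a subsystem.

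First I would compare the right hand sides. Take $\alpha,\beta\notin\R$ and $N$ large enough that the index stabilization condition holds for both series. In $\p^2\F_N/\p t_\alpha\p t_\beta$ we substitute $t_\gamma=\eta^{\gamma\delta}\p_1\p_\delta f$. The key observation is that the involution $\gamma\mapsto\bar\gamma$ preserves $\R_N$ in the examples of interest. For $B_N\hookrightarrow A_{2N-1}$, $\bar\gamma=2N-\gamma$ preserves parity. For $B_N\hookrightarrow D_{N+1}$, $\overline{N+1}=N+1$. In general this is the statement that $\eta$ restricts non-degenerately to $M'_N$ and that $\R_N$ is $\eta$-orthogonal to its complement.

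Hence setting $\p_1\p_k f=0$ for $k\in\R$ is exactly the restriction $t_\gamma=0$ for $\gamma\in\R_N$, i.e.\ the restriction to $M'_N$. By definition, $\F'_N$ is the restriction of $\F_N$ to $M'_N$; in the paper's examples this is Eqs.~\eqref{eq: Bn via An} and \eqref{eq: Bn via Dn}. Restriction commutes with differentiation in the tangential directions $t_\alpha,t_\beta$, $\alpha,\beta\notin\R$, and the induced pairing $\eta'$ is the restriction of $\eta$. Therefore the right hand side becomes $\p^2\F'_{N}/\p t_\alpha\p t_\beta$ evaluated at $t_\gamma=\eta'^{\gamma\delta}\p_1\p_\delta f$. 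This matches the $\{\F'_N\}$ system once the indices are relabeled. Using $\R_N\subseteq\R_{N+1}$ together with the stabilization of both series, the choice of $N$ is irrelevant, so the identification holds for all $\alpha,\beta$ simultaneously.

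Next I would check that the constraint is invariant, which is where naturality enters. Apply Eq.~\eqref{eq: main PDE} with $\beta=1$ and with $\beta=k\in\R$, $\alpha\notin\R$, and use that $\p_1$ is the unit. The quantity $\p_\alpha(\p_1\p_k f)$ is, up to $\eta$, given by the structure constant $c_{\alpha 1}^{\bar k}$, or more precisely $\p_1\p_\alpha\p_k$ of $\F_N$, evaluated at the constrained point. The natural property $((a\circ b)|_{M'})^\perp=0$ says that $\p^3\F_N/\p t_\alpha\p t_\beta\p t_k$ vanishes on $M'_N$ for $\alpha,\beta\notin\R$ and $k\in\R$. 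Therefore $\p_\alpha$ of every constrained quantity vanishes along solutions, which is the needed invariance. Proposition~\ref{prop: natural submanifolds} supplies naturality in the two examples.

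I expect the main obstacle to be this invariance step. Eq.~\eqref{eq: main PDE} expresses $\p_\alpha\p_k f$ through the Cauchy data, so one must differentiate the relation $\p_\alpha\p_k f = \p_\alpha\p_k\F_N(\ldots)$ in $t_1$. By the chain rule this produces third derivatives $\p_\alpha\p_k\p_\gamma\F_N$ multiplied by $\p_1^2\p_\delta f$ with $\bar\delta=\gamma$. These third derivatives vanish on $M'_N$ by naturality when $\gamma\notin\R$. The remaining terms, those with $\gamma\in\R$, are proportional to $\p_1^2\p_{\delta} f$ for $\delta\in\R$, and these are zero since they are $t_1$-derivatives of the constrained data. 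A bootstrapping argument on $t_1$-derivatives then closes this step.
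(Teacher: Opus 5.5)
Your first two paragraphs are the paper's proof. The closure of $\R_N$ under $\gamma\mapsto\bar\gamma$ turns $\p_1\p_kf=0$, $k\in\R_N$, into $t_{\bar k}=0$, i.e.\ restriction to $M'_N$. The tangential second derivatives of $\F_N$ then restrict to those of $\F'_N$. The paper's proof is exactly this chain of equalities and stops there.

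Your invariance step is an addition the paper does not make, and it is where naturality actually enters. The identification of right-hand sides only needs $\eta$ to be block-diagonal with respect to $\R_N$. In that case $\mathrm{pr}$ simply drops the $\p/\p t_k$, $k\in\R_N$, components, so $\F'_N$ has the restricted third derivatives whether or not $M'_N$ is natural; the paper's argument indeed never invokes naturality. Your step upgrades the formal coincidence of equations to a consistent reduction: $u_k:=\p_1\p_kf$, $k\in\R$, stays zero along the $t_\alpha$-flows, $\alpha\notin\R$.

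One correction is needed in that step. Your first formulation, identifying $\p_\alpha(\p_1\p_kf)$ with $c_{\alpha1}^{\bar k}$ or with $\p_1\p_\alpha\p_k\F_N$, is wrong: the latter is the constant $\eta_{\alpha k}$, which vanishes trivially and is not what $\p_\alpha\p_1\p_kf$ equals. The right identity is the chain-rule one in your last paragraph, $\p_\alpha u_k=\sum_{\gamma,\delta}\frac{\p^3\F_N}{\p t_\alpha\p t_k\p t_\gamma}\,\eta^{\gamma\delta}\,\p_1^2\p_\delta f$. By naturality the coefficients with $\gamma\notin\R$ lie in the ideal generated by the $u_j$, and the terms with $\gamma\in\R$ are multiples of $\p_1u_{\bar\gamma}$. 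Hence the right-hand side vanishes when $u\equiv0$. Uniqueness of formal solutions in $t_\alpha$ with zero initial data for $u$ is then the precise form of your ``bootstrapping''.
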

\begin{proof}
    It follows from the construction that for all $\alpha,\beta,\gamma \not\in \R_N$ we have
    \[
     \frac{\p^3 \F_N'}{\p t_\alpha \p t_\beta \p t_\gamma}
     =
     \frac{\p^3 \F_N}{\p t_\alpha \p t_\beta \p t_\gamma} \mid_{t_\delta = 0, \ \forall \delta \in \R_N}.
    \]
    Because $\F_N'$ is a Frobenius potential of a Dubrovin--Frobenius submanifold, for any $\alpha \in \R_N$ and any other index $\beta$, such that $\eta^{\alpha\beta} \neq 0$ we have $\beta \in \R_N$.
    Then Eq.~\eqref{eq: main PDE} for $\F'_N$ gives
    \begin{align*}
     \p_\alpha \p_\beta f
     & = \frac{\p^2 \F'_N}{\p t_\alpha \p t_\beta} \mid_{t_\nu = \eta^{\nu \delta} \p_1 \p_\delta f}
     = \frac{\p^2 \F_N}{\p t_\alpha \p t_\beta} \mid_{t_\delta = 0, \ \forall \delta \in \R_N} \mid_{t_\nu = \eta^{\nu \delta} \p_1 \p_\delta f}
     \\
     & = \left( \frac{\p^2 \F_N}{\p t_\alpha \p t_\beta} \mid_{t_\nu = \eta^{\nu \delta} \p_1 \p_\delta f} \right) \mid_{\p_1 \p_\delta f = 0, \ \forall \delta \in \R_N}.
    \end{align*}
\end{proof}

The proposition above shows an interesting connection between the natural Dubrovin--Frobenius submanifolds and subflows of the integrable systems. Recall that for the submanifolds we where considering the products $\p / \p t_\alpha \circ \p / \p t_\beta$ of the small number of the basis vectors - only those satisfying $\alpha,\beta \not\in \R_N$. On the PDEs side this translates to considering the flows $\p_\alpha\p_\beta f$ of an unknown function $f$ only with respect to the restricted set of variables.

Also, for the submanifolds, the coefficients of the product $\p / \p t_\alpha \circ \p / \p t_\beta$, when computed in the ambient Dubrovin--Frobenius structure, had to be restricted to $t_k = 0$ for $k \in \R_N$. On the PDEs side this translates to setting the respective Cauchy data to zero: $\p_1\p_k f = 0$.

\section{KP, BKP and CKP}
In this section we gather all the data we need about the KP, BKP and CKP hierarchies. We do not give full account of them referencing an interested reader to \cite{DJKM,NZ} and \cite{Za05,Za21}. We closely follow the work of Zabrodin below.

\subsection{KP hierarchy}
Consider the operators
\[
    D(z) := \sum_{n \ge 1} \frac{z^{-n}}{n} \p_{n}, \quad \Delta(z) := \frac{\exp(2\hbar D(z))-1}{\hbar}.
\]
Let $\bt = t_1,t_2,t_3,t_4,\dots$.
For a function $\tau = \tau(\bt)$ denote
\[
 \tau(\bt + [z^{-1}]) := \tau(t_1 + \frac{1}{z},t_2 + \frac{1}{2z^2},t_3 + \frac{1}{3 z^3},t_4 + \frac{1}{4 z^4},\dots) = \exp(D(z)) \cdot \tau(\bt).
\]

The function $\tau = \tau(\bt)$ is said to be a tau--function of KP hierarchy if it satisfies
\[
    \p_1 \log \frac{\tau(\bt + [z_1^{-1}]-[z_2^{-1}])}{\tau(\bt)}
    =
    (z_2-z_1) \left(
    \frac{\tau(\bt +[z_1^{-1}])\tau(\bt -[z_2^{-1}])}{\tau(\bt)\tau(\bt +[z_1^{-1}]-[z_2^{-1}])} - 1
    \right).
\]
KP hierarchy can be equivalently rewritten in a Fay form on the function $f = \hbar^2\log\tau$.
The Fay form of KP hierarchy is the following equation in the ring of formal power series in $z_1^{-1},z_2^{-1}$.
\[
    \exp(\Delta(z_1)\Delta(z_2) f ) = 1 - \frac{\Delta(z_1)\p_1f - \Delta(z_2)\p_1f}{z_1-z_2}.
\]
Assume also $f$ to have an $\hbar$--expansion $f = \sum_{g \ge 0} \hbar^g f_g$. The dispersionless limit of KP hierarchy in the Fay form is the following equation
\[
    \exp(D(z_1)D(z_2) f_0 ) = 1 - \frac{D(z_1)\p_1f_0 - D(z_2)\p_1f_0}{z_1-z_2}.
\]

\subsection{BKP hierarchy}\label{section: 1-BKP}
In what follows we need to distinguish between the even and odd--indexed variables. Denote  $\bt_o := t_1,t_3,t_5,\dots$.
% \[
%  \bt_o := t_1,t_3,t_5,\dots \ \text{and} \ \bt_e := t_2,t_4,t_6,\dots.
% \]

Consider the operators
\[
    D^{\mathrm{B}}(z) := \sum_{n \ge 0} \frac{z^{-2n-1}}{2n+1} \p_{2n+1}, \quad \Delta^{\mathrm{B}}(z) := \frac{\exp(2\hbar D^{\mathrm{B}}(z))-1}{\hbar}.
\]
For a function $\tau = \tau(\bt_o)$ denote
\[
 \tau(\bt_o + 2[z^{-1}]_o) := \tau(t_1 + \frac{2}{z},t_3 + \frac{2}{3 z^3},t_5 + \frac{2}{5 z^5},\dots) = \exp(2 D^{\mathrm{B}}(z)) \cdot \tau(\bt_o)
\]
and
\[
 \tau^{[z]_o} := \tau(\bt_o + 2[z^{-1}]_o), \quad \tau^{[z_1z_2]_o} := \tau(\bt_o + 2[z_1^{-1}]_o+ 2[z_2^{-1}]_o).
\]

The function $\tau = \tau(\bt_o)$ is said to be a tau--function of BKP hierarchy if it satisfies (see Section 3.2 of \cite{Za21})
\[
 \tau \tau^{[z_1z_2]_o} \left( 1 + \frac{1}{z_1 + z_2} \p_1 \log \frac{\tau}{\tau^{[z_1z_2]_o}} \right)
 =
 \tau^{[z_1]_o}\tau^{[z_2]_o} \left(1 + \frac{1}{z_1 - z_2} \p_1 \log \frac{\tau^{[z_1]_o}}{\tau^{[z_2]_o}}
 \right).
\]

BKP hierarchy can be rewritten in a Fay form on the function $f = \hbar^2\log\tau$.
\begin{align*}
    &\left(z_1 + z_2 - \p_1 \hbar \Delta^{\mathrm{B}}(z_1)\Delta^{\mathrm{B}}(z_2) f  - \p_1(\Delta^{\mathrm{B}}(z_1)f + \Delta^{\mathrm{B}}(z_2)f)\right)
    \exp(\Delta^{\mathrm{B}}(z_1)\Delta^{\mathrm{B}}(z_2) f)
%     \tag{BKP}\label{bkp}
     \\
     &\quad\quad\quad = \frac{z_1+z_2}{z_1-z_2} \left( z_1-z_2 - \p_1(\Delta^{\mathrm{B}}(z_1)f - \Delta^{\mathrm{B}}(z_2)f) \right).
     \notag
\end{align*}
Assume also $f$ to have an $\hbar$--expansion $f = \sum_{g \ge 0} \hbar^g f_g$. The dispersionless limit of BKP hierarchy in the Fay form is the following equation
\begin{align}
    \left( 1 - \frac{\p_1 (2D^{\mathrm{B}}(z_1) + 2D^{\mathrm{B}}(z_2))f_0}{z_1+z_2} \right) e^{2D^{\mathrm{B}}(z_1)\cdot2D^{\mathrm{B}}(z_2) f_0}
     = 1 - \frac{\p_1 (2D^{\mathrm{B}}(z_1) - 2D^{\mathrm{B}}(z_2))f_0}{z_1-z_2}.
\label{bkp-dl}
\end{align}

\subsection{CKP hierarchy}\label{section: CKP}
The function $\tau = \tau(\bt_o)$ is said to be a tau--function of CKP hierarchy if it satisfies (see Section 4.3 of \cite{Za21})
\begin{align}
    &
    \frac{z_1+z_2}{z_1-z_2} \Bigg[ \left( z_1 - \p_1 \log \frac{\tau^{[z_1z_2]_o}}{\tau^{[z_2]_o}} \right)^{1/2}
    \left( z_1 - \p_1 \log \frac{\tau^{[z_1]_o}}{\tau} \right)^{1/2}
    \notag
    \\
    &\quad - \left( z_2 - \p_1 \log \frac{\tau^{[z_1z_2]_o}}{\tau^{[z_1]_o}} \right)^{1/2}
    \left( z_2 - \p_1 \log \frac{\tau^{[z_2]_o}}{\tau} \right)^{1/2} \Bigg]
    \notag
    \\
    &\qquad = \left( z_1 + z_2 - \p_1 \log \frac{\tau^{[z_1z_2]_o}}{\tau} \right)
    \frac{\tau \tau^{[z_1z_2]_o}}{\tau^{[z_1]_o} \tau^{[z_2]_o}}.
\end{align}
The dispersionless limit of CKP hierarchy coincides with the dispersionless limit of BKP hierarchy --- Eq.~\eqref{bkp-dl}. This perfectly reflects the fact that the invariant sectors of $M_{A_{2N-1},\ZZ/ 2\ZZ}$ and $M_{D_{N+1},\ZZ / 2\ZZ}$ are isomorphic as the Dubrovin--Frobenius manifolds.

Equation on the tau--function above is more complicated than the respective equation for the BKP hierarchy. Working with the CKP hierarchy in what follows we use the following result of Krichever and Zabrodin. Essentially it shows that CKP hierarchy can be understood as the restriction of KP (see Section~2.3 of \cite{KZ}).

\begin{theorem}[Krichever, Zabrodin]\label{theorem: KZ on CKP}
    For any tau--function $\tau^{\mathrm{CKP}}$ of the CKP hierarchy there is a unique tau--function $\tau^{\mathrm{KP}}$ of the KP hierarchy, such that
    \[
        0 = \p_1\p_{2k} \tau^{\mathrm{KP}} \mid_{t_{2m} = 0, \ m \ge 1}, \quad \forall k \ge 1
    \]
    and
    \[
        \left( \tau^{\mathrm{CKP}}(t_1,t_3,t_5,\dots) \right)^2 = \tau^{\mathrm{KP}}(t_1,0,t_3,0,t_5,\dots).
    \]
\end{theorem}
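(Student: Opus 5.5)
The plan is to treat this as a Cauchy problem for the KP hierarchy in Fay form, in the spirit of Section~\ref{section: int sys defintion}. Write $f^{\mathrm{KP}} = \hbar^2 \log \tau^{\mathrm{KP}}$. Reading the differential condition in the statement as a condition on $f^{\mathrm{KP}}$ (i.e.\ on $\log\tau^{\mathrm{KP}}$), the KP Fay identity expresses every second derivative $\p_\alpha\p_\beta f^{\mathrm{KP}}$ through the Cauchy data $\lbrace \p_1\p_k f^{\mathrm{KP}} \rbrace_{k \ge 1}$. It therefore suffices to control this data on the slice $\lbrace t_{2m}=0 \rbrace$.

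\emph{Uniqueness.} The theorem prescribes both pieces of data on the slice. The restriction of $\tau^{\mathrm{KP}}$ to odd times is $(\tau^{\mathrm{CKP}})^2$, which fixes the odd data $\p_1\p_{2k+1} f^{\mathrm{KP}}$ there. The even data $\p_1\p_{2k} f^{\mathrm{KP}}$ vanish on the slice by assumption. I would then show, by induction on the number of even derivatives, that the Taylor expansion of $f^{\mathrm{KP}}$ in the even times is determined. Each coefficient $\p_{2k_1}\cdots\p_{2k_r}\p_1\p_j f^{\mathrm{KP}}\mid_{t_{\mathrm{even}}=0}$ is obtained by differentiating the Fay identity and substituting lower-order coefficients. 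This gives uniqueness of $f^{\mathrm{KP}}$ up to terms linear in $\bt$, which correspond to a trivial rescaling of $\tau$.

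\emph{Existence.} I would pass to the Lax/dressing description. From $\tau^{\mathrm{CKP}}$ build the CKP wave function. This is equivalent to a pseudo-differential operator $L$ with $L^* = -L$, evolving in the odd times via $\p_{2k+1}L = [(L^{2k+1})_+,L]$. I would then extend $L$ to all times by the even KP flows $\p_{2k}L=[(L^{2k})_+,L]$. These commute with the odd flows, so the extension yields a KP Lax operator and hence a KP tau-function $\tau^{\mathrm{KP}}$. Two identifications remain on the slice $t_{\mathrm{even}}=0$.
\begin{itemize}
\item The vanishing of the even data should follow from $L^*=-L$. Indeed $\p_1\p_{2k}\log\tau^{\mathrm{KP}} = \mathrm{res}\,L^{2k}$, and $\mathrm{res}\,L^{2k}$ vanishes for an anti-self-adjoint $L$, since $(L^{2k})^* = L^{2k}$ has zero residue.
\item The odd data must match $2\,\p_1\p_{2k+1}\log\tau^{\mathrm{CKP}}$. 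This amounts to checking that $\mathrm{res}\,L^{2k+1}$ equals twice the corresponding CKP quantity.
\end{itemize}

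The main obstacle is the second identification, i.e.\ proving $\tau^{\mathrm{KP}}\mid_{\mathrm{slice}} = (\tau^{\mathrm{CKP}})^2$ exactly rather than merely at the level of second derivatives. I would first try to compare the KP wave function $\psi(\bt,z)$ with $\psi(\bt,-z)$ via the adjoint wave function. For $L^*=-L$ one has $\psi^*(\bt,z)=\psi(\bt,-z)$ on the slice. The KP bilinear identity then rewrites as the CKP identity displayed in Section~\ref{section: CKP}, with $\tau^{\mathrm{KP}}(\bt - [z^{-1}])$ traded for $\tau^{\mathrm{KP}}(\bt+[z^{-1}])$. The square roots appearing in the CKP equation are precisely what arises when $\tau^{\mathrm{KP}}$ is replaced by $(\tau^{\mathrm{CKP}})^2$. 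Checking this identification term by term in the Fay form is the step where I expect the real work to lie.
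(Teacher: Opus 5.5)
The paper gives no proof of its own here; it quotes the result from Krichever--Zabrodin. Your outline is the natural Lax-theoretic route: an anti-self-adjoint $L$, extended by the even KP flows, with $\p_1\p_{2k}\log\tau^{\mathrm{KP}}=\mathrm{res}\,L^{2k}=0$ and $\psi^*(\bt,z)=\psi(\bt,-z)$ on $\{t_{2m}=0\}$. Those ingredients are fine. The genuine gap is the step you postpone, which is the whole content of the theorem, and the mechanism you sketch for it does not work as stated.

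In tau-function terms, $\psi^*(\bt,z)=\psi(\bt,-z)$ on the slice says $\tau^{\mathrm{KP}}(\bt+[(-z)^{-1}])=\tau^{\mathrm{KP}}(\bt-[z^{-1}])$. Since $[(-z)^{-1}]=(-[z^{-1}]_o,[z^{-1}]_e)$, this is invariance under flipping the sign of the even components of the shift. It is not a trade of $\bt-[z^{-1}]$ for $\bt+[z^{-1}]$, and both sides lie \emph{off} the slice. Restricting the KP bilinear identity to the slice does give the CKP bilinear identity for $\psi$. However, $\psi$ is still written through values of $\tau^{\mathrm{KP}}$ at points with nonzero even times. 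So nothing follows about $\tau^{\mathrm{KP}}(t_1,0,t_3,0,\dots)$, let alone its square root, and ``replacing $\tau^{\mathrm{KP}}$ by $(\tau^{\mathrm{CKP}})^2$'' has no meaning at those points.

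The missing idea is to eliminate the off-slice values with the KP differential Fay identity (the defining equation in the KP subsection), specialised to $z_1=-z$, $z_2=z$. Then $\bt+[z_1^{-1}]-[z_2^{-1}]=\bt-2[z^{-1}]_o$ stays on the slice, and the two off-slice values on the right-hand side coincide by the symmetry. For $t_{2m}=0$ one obtains
\[
\left(\frac{\tau^{\mathrm{KP}}(\bt-[z^{-1}])}{\tau^{\mathrm{KP}}(\bt)}\right)^{2}
=\Bigl(1+\frac{1}{2z}\,\p_1\Bigr)\frac{\tau^{\mathrm{KP}}(\bt-2[z^{-1}]_o)}{\tau^{\mathrm{KP}}(\bt)},
\]
whose right-hand side involves only $\tau^{\mathrm{KP}}|_{t_{2m}=0}$. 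If this restriction equals $(\tau^{\mathrm{CKP}})^2$, put $R=\tau^{\mathrm{CKP}}(\bt_o-2[z^{-1}]_o)/\tau^{\mathrm{CKP}}(\bt_o)$. The square root of the display then reads $\psi=\bigl(1+z^{-1}\p_1\log R\bigr)^{1/2}R\,e^{\xi}$, the CKP wave function written through $\tau^{\mathrm{CKP}}$. This is where the square roots in the CKP equation come from. The slice bilinear identity then yields the CKP equation for $\sqrt{\tau^{\mathrm{KP}}|_{t_{2m}=0}}$.

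Several further points need attention.

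\emph{Matching the given $\tau^{\mathrm{CKP}}$.} You must show that the CKP wave function determines its tau-function up to $e^{c_0+\sum_k c_{2k+1}t_{2k+1}}$. This follows from the same identity, since $1+\frac{1}{2z}\p_1$ is invertible over $\CC[[z^{-1}]]$. That factor is then absorbed into the gauge of $\tau^{\mathrm{KP}}$.

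\emph{The converse you assume.} The paper defines CKP tau-functions by the functional equation. Your first step, functional equation $\Rightarrow$ CKP bilinear identity $\Rightarrow$ $L^*=-L$, is a converse that you assume rather than prove.

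\emph{Uniqueness.} Under your reading, both conditions are invariant under $\tau^{\mathrm{KP}}\mapsto e^{\sum_k c_{2k}t_{2k}}\tau^{\mathrm{KP}}$. The constant and the odd linear terms are pinned by the restriction, but the even linear terms are not. So ``unique up to terms linear in $\bt$'' is not a harmless caveat; it is exactly the failure of the claimed uniqueness. The symmetry used above is also gauge-dependent: a factor $e^{ct_2}$ already breaks it.

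Both issues are settled by one normalization. The operator $-L^*(\bt_o,-\bt_e)$ is again a KP Lax operator and agrees with $L$ on the slice, hence equals $L$. After removing a factor $e^{\sum_k c_{2k}t_{2k}}$, one may therefore take $\tau^{\mathrm{KP}}$ even in the even times $\bt_e$. In this gauge the symmetry holds automatically. Also $\p_{2k}\tau^{\mathrm{KP}}$, and hence $\p_1\p_{2k}\tau^{\mathrm{KP}}$, vanishes on the slice, which is the stated condition in its literal form. Uniqueness then holds among tau-functions so normalized.
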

This theorem is a part of Theorem~2.2 of \cite{KZ} together with some ingredients of its proof that are important in our exposition (Lemma~2.1 and Appendix~A of loc.cit.).

\subsection{A type Dubrovin--Frobenius manifold hierarchy}\label{sec:Ahier}
The system~\eqref{eq: main PDE} for the stabilizing series $\lbrace \F_{A_N} \rbrace$ was investigated in \cite{BDbN}.
It was proved in Theorem~6.2 of \cite{BDbN} that the system \eqref{eq: main PDE} coincides with the dispersionless limit of KP hierarchy after the substitution $t_k \mapsto t_{k}/k$.

Consider the differential operators $\p_k^\hbar$ defined by the equality
\[
    \Delta(z) = \sum_{k \ge 1} \frac{z^{-k}}{k} \p_k^\hbar.
\]
In particular, we have $\p_k^\hbar = \p_k + O(\hbar)$.
It follows from the results of Natanzon--Zabrodin \cite[Lemma~3.2]{NZ} that full KP hierarchy is obtained from the dispersionless one by the substitution $\p_k \mapsto \p_k^\hbar$. This gives us that the full KP hierarchy is obtained from the system~\eqref{eq: main PDE} for the stabilizing series $\lbrace \F_{A_N} \rbrace$ via this substitution.

\subsection{D type Dubrovin--Frobenius manifold hierarchy}\label{sec:Dhier}

The system~\eqref{eq: main PDE} for the stabilizing series $\lbrace \F_{D_N} \rbrace$ was investigated in \cite{BDbN}. The variable $t_N$ enters the potential $\F_{D_N}$ in a very special way.
As a consequence, this system of PDEs distributes in the following two
% Due to this it is useful to distribute the system of PDEs in two. It assumes the following form.
\begin{align}
 \p_\alpha\p_\beta f &= \frac{\p^2 \F_{D_N}}{\p t_\alpha \p t_\beta} \mid_{t_\gamma = \eta^{\gamma\delta} \p_1\p_\delta f}, \quad N \ge \alpha + \beta,
 \tag{dlD-1}\label{eq: dlD-1}
 \\
 \p_\alpha\p_0 f &= \frac{\p^2 \F_{D_N}}{\p t_\alpha \p t_N} \mid_{t_\gamma = \eta^{\gamma\delta} \p_1\p_\delta f}, \quad N \ge \alpha,
 \tag{dlD-2}\label{eq: dlD-2}
\end{align}
on the unknown function $f = f(t_0,t_1,t_2,\dots)$ with the Cauchy data $\p_1 \p_0 f, \p_1 \p_1 f, \p_1 \p_2 f, \dots$.

This is important to note that due to the special form of a D type Frobenius potential, the right hand side of Eq.~\eqref{eq: dlD-1} does not depend on $\p_1\p_0f$.
It was proved in Theorem~7.1 of \cite{BDbN} that the system \eqref{eq: dlD-1} above coincides with the dispersionless limit of BKP hierarchy after the substitution $t_k \mapsto t_{2k-1}/(2k-1)$ and $f \mapsto 2 f$. The flows \eqref{eq: dlD-2} where identified as well, however they do not play any role in the current paper.

\subsubsection{Dispersionfull hierarchy}\label{section: D type dispersionfull hierachy}
Similarly to the case of the stabilizing series $\lbrace F_{A_N} \rbrace $ and the KP hierarchy, we expect that BKP hierarchy can be obtained from its dispersionless limit via some $\hbar$--deformation. We are not aware about such a result to be established up to now.

According to \cite{LWZ} and \cite{FGM10,GM05} the hierarchy, associated to the D type series of Dubrovin--Frobenius manifolds is 2--component BKP hierarchy that is reduced in one of its components. The flows of Eq.~\eqref{eq: dlD-1} correspond to the full BKP component and are called positive in \cite{LWZ}. The flows of Eq.~\eqref{eq: dlD-2} correspond to the reduced BKP component and are called negative in loc.cit..

\section{Orbifold Saito theory}
In this section consider the polynomials $f_{A_{2N-1}}$ and $f_{D_N}$ introduced in Section~\ref{sec:frobstruc} together with two different symmetry groups $G$ both homeomorphic to $\ZZ/2\ZZ$.

For $f_{A_{2N-1}}$ let $G$ be generated by  $g: (x,y) \mapsto (-x,-y)$. And for and $f_{D_N}$ let $G$ be generated by  $g: (x,y) \mapsto (x,-y)$.
One notes immediately that such transformations preserve the polynomials given.

Mirror symmetry suggests existence of orbifold Saito theory associated with the pairs $(f_{A_{2N-1}},G)$ and $(f_{D_N},G)$. The attempts to construct an orbifold Saito theory were made in \cite{Tu1,Tu2, BT2, BR}.

Like the classical Saito theory, orbifold Saito theory should produce a Dubrovin--Frobenius manifold. Stemming from the orbifold theory this Dubrovin--Frobenius manifold should satisfy the certain special properties. In particular, its potential should depend on the variables $t_{g,\alpha}$ with $g \in G$ and $\alpha$ --- integer index.

The following theorem can be used as the definition of the Dubrovin--Frobenius manifolds of the Landau--Ginzburg orbifolds we are interested in.

\begin{theorem}
    Dubrovin--Frobenius manifold of the Landau--Ginzburg orbifold $(f_{A_{2N-1}},G)$ depends on the variables $t_{\id,1},\dots,t_{\id,N},t_{g,1}$ and is given by
\[
 \F_{(A_{2N-1},G)}(t_{\id,1},\dots,t_{\id,N},t_{g,1}) = \F_{D_{N+1}}(t_{\id,1},\dots,t_{\id,N},t_{g,1}).
\]

    Dubrovin--Frobenius manifold of the Landau--Ginzburg orbifold $(f_{D_{N+1}},G)$ depends on the variables $t_{\id,1},\dots,t_{\id,N},t_{g,1}$ and is given by
    \[
 \F_{(D_{N+1},G)}(t_{\id,1},\dots,t_{\id,N},t_{g,1},\dots,t_{g,N-1}) = \F_{A_{2N-1}}(t_{\id,1},t_{g,1},t_{\id,2},t_{g,2},\dots,t_{\id,N}).
\]
\end{theorem}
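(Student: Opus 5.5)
The plan is to argue via the mirror-type identification of the $G$--graded Frobenius algebra at the origin, followed by a uniqueness/reconstruction argument. The orbifold Saito Frobenius manifold of $(f,G)$ is built (following \cite{BT2,BR}) from the $G$--equivariant unfolding of $f$ together with the twisted sectors. So first I would determine the underlying $G$--graded Frobenius algebra at $t=0$, with its grading (Euler weights) and pairing. Then I would show that a polynomial Dubrovin--Frobenius potential with these data is unique and coincides with the one claimed.

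\emph{Step 1: the state space.} For $(f_{A_{2N-1}},G)$ with $g(x,y)=(-x,-y)$, the $\id$--sector is spanned by the $G$--invariant part of the Jacobian algebra twisted by $dx\wedge dy$. Since $g$ acts trivially on $dx\wedge dy$, these are the classes $1,x^2,\dots,x^{2N-2}$, which gives $N$ elements. The $g$--sector comes from the fixed locus $\{0\}$ and is one-dimensional, spanned by the Hessian-type generator $\xi_g$. For $(f_{D_{N+1}},G)$ with $g(x,y)=(x,-y)$ the fixed locus is $\{y=0\}$. The $g$--sector is therefore the Jacobian algebra of $f|_{y=0}=x^N/N$, twisted by $dx$, which has dimension $N-1$. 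The $\id$--sector consists of the $N$ invariant classes $1,x,\dots,x^{N-1}$ (up to the convention for the sign character on forms). This reproduces the variable sets $t_{\id,1},\dots,t_{\id,N},t_{g,\bullet}$ of the statement. I would then compute the orbifold product and pairing, using the explicit formulas of \cite{BT2} for the product $\xi_g\circ\xi_g$ and for the mixed products. The aim is an isomorphism of graded Frobenius algebras with the Jacobian algebra of $f_{D_{N+1}}$, respectively $f_{A_{2N-1}}$, under which $\xi_g\mapsto y$ (the class of $t_{N+1}$), respectively $x^{\,odd}$--classes $\mapsto$ the $g$--sector. Degree matching is a check of the Euler weights $E_{D_{N+1}}$ and $E_{A_{2N-1}}$ listed above.

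\emph{Step 2: the invariant sector.} The $\id$--sector restriction of the orbifold Saito structure is by construction the $G$--invariant part of the classical Saito structure, i.e.\ the submanifold $\{t_{2k}=0\}\subset M_{A_{2N-1}}$ or $\{t_{N+1}=0\}\subset M_{D_{N+1}}$. By \eqref{eq: Bn via An}, \eqref{eq: Bn via Dn} and Proposition~\ref{prop: natural submanifolds}, this is $M_{B_N}$ in both cases, with the same flat coordinates. Thus both candidate potentials agree with $\F_{B_N}$ on the invariant sector. Moreover, both are even in the twisted variables, because the $G$--action (the character of the $g$--sector) forces $\F$ to be invariant under $t_{g,\bullet}\mapsto -t_{g,\bullet}$, exactly as in the proof of Proposition~\ref{prop: natural submanifolds}.

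\emph{Step 3: uniqueness.} This is the main obstacle. I would first try Hertling's classification: polynomial semisimple Dubrovin--Frobenius manifolds with positive degrees are the Coxeter ones, determined by their spectrum. The spectrum computed in Step 1 equals that of $D_{N+1}$, respectively $A_{2N-1}$, which would finish the proof once semisimplicity and polynomiality of the orbifold potential are justified from \cite{BR}. If that route is too heavy, I would instead use a reconstruction argument. The algebra at the origin is generated by the first-degree classes, namely $t_{\id,2}$ together with the twisted generator. WDVV together with quasi-homogeneity then determines $\F$ from its cubic term and the finitely many quartic terms allowed by the weights. These are fixed by Step 1 and by the restriction to $\F_{B_N}$ from Step 2, so they match those of $\F_{D_{N+1}}$, respectively $\F_{A_{2N-1}}$.
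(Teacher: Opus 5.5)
Your plan is sound in outline, but it takes a different route from the paper. The paper justifies the statement by citation only. It cites \cite{Kau03,Kau06,BTW23,BT1} for the isomorphism of $G$--graded Frobenius algebras at the origin, and \cite{BR} for the fact that the full orbifold Saito structures agree under the stated substitution. The orbifold equivalences $(A_{2N-1},G)\sim D_{N+1}$ and $(D_{N+1},G)\sim A_{2N-1}$ serve only as motivation, and the paper even lets the theorem stand as a definition.

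Your Step~1 is the first of these citations. What you replace is the Saito--level computation of \cite{BR}, by a rigidity argument. The $\id$--sector is the classical $G$--invariant theory, which by \eqref{eq: Bn via An} and \eqref{eq: Bn via Dn} is $\F_{B_N}$ on both sides; WDVV plus the weights then force the twisted part. This explains why the answer is forced and why $B_N$ sits inside both theories. However, you still need from \cite{BR} (or \cite{BT2}) that the orbifold theory exists as a polynomial Frobenius manifold, even in the twisted variables, whose $\id$--sector is the classical invariant one in the same flat coordinates. The Hertling variant also needs semisimplicity. That is most of what the paper cites anyway.

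Also fix the convention in the $(D_{N+1},G)$ case. Since $\det g=-1$, the invariants of $\mathrm{Jac}(f)\,dx\wedge dy$ are spanned by $y\,dx\wedge dy$ alone. Your counts $N$ and $N-1$ need, e.g., the stabilization $f+z^2$ with $z\mapsto -z$, so that $g\in SL_3$.

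Two steps need more than you wrote.

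\textbf{Reconstruction variant.} The quartic coefficients containing twisted variables are fixed neither by the cubic term nor by the restriction to $t_{g,\bullet}=0$. You need WDVV relations tying them to invariant ones, and these do exist. Differentiating WDVV once at the origin gives $\langle e,a,b,c\star d\rangle=\langle e,a,c,b\star d\rangle+\langle e,a\star c,b,d\rangle-\langle e,a\star b,c,d\rangle$. For $N\ge 3$ this yields:
\begin{itemize}
\item in the $D_{N+1}$ picture, using $xy=0$ and $y^2=-x^{N-1}$: $\langle y,y,x^{p-1},x^{q-1}\rangle=\langle y,y,x,x^{N-1}\rangle=\langle x,x,x^{N-2},x^{N-1}\rangle$ for $p+q=N+2$;
\item in the $A_{2N-1}$ picture, using $X^{2N-1}=0$: $\langle X,X,X^{2N-2},X^{2N-2}\rangle=\langle X^2,X^2,X^{2N-4},X^{2N-2}\rangle$.
\end{itemize}
For $(A_{2N-1},G)$ there is a quicker argument. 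The weights exclude $t_{g,1}^4$, so $\F=\F_{B_N}+\frac12 t_{g,1}^2 h$. Associativity at $t_{g,1}=0$ then gives $\p_\gamma\circ\nabla h=(\p_\gamma h)\nabla h$ for all $\gamma$, where $\nabla$ is the $\eta$--gradient. Hence $h$ is a canonical coordinate of $B_N$. For $N\ge 3$ the only polynomial one is the critical value at $x=0$, namely $h=v_1$, which is exactly $\p_{t_{N+1}}\F_{D_{N+1}}=v_1t_{N+1}$.

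\textbf{Hertling variant.} You only obtain an abstract isomorphism, so you must still pin down the linear map of flat coordinates to reach the stated identification. This is not automatic for $D_{N+1}$ with $N$ odd, where $t_{(N+1)/2}$ and $t_{N+1}$ both have degree $\frac{N+1}{2N}$ and could mix.
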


This theorem follows from \cite{Kau03,Kau06,BTW23,BT1} on the level of Frobenius algebras at the origin and from \cite{BR} on the level of Saito theories.

Main idea behind this theorem is that there is an orbifold equivalence $(A_{2N-1},G) \sim D_{N+1}$ and $(D_{N+1},G) \sim A_{2N-1}$ that can be obtained via the crepant resolution argument (see Section~5 of \cite{BTW23} and Section~6 of \cite{BT1}). Derived categories of the orbifold equivalent pairs are equivalent what suggests (but not proves) that the corresponding Dubrovin--Frobenius manifolds are isomorphic. Finally in \cite{BR} it was computed that the Saito theories match under this equivalence exaclty with the variable substitution as above.

% The potential of the Landau--Ginzburg orbifold $(f_{A_{2N-2}},G)$ is given by
% \[
%  F_{(A_{2N-2},G)}(t_{\id,1},\dots,t_{\id,N},t_{g,1}) = F_{D_{N+1}}(t_{\id,1},\dots,t_{\id,N},t_{g,1})
% \]
% The potential of the Landau--Ginzburg orbifold $(f_{D_{N+1}},G)$ is given by
% \[
%  F_{(D_{N+1},G)}(t_{\id,1},\dots,t_{\id,N},t_{g,1},\dots,t_{g,N-1}) = F_{A_{2N-2}}(t_{\id,1},t_{g,1},t_{\id,2},t_{g,2},\dots,t_{\id,N})
% \]

Denote by $M_{(f_{A_{2N-1}},G),\id}$ and $M_{(f_{D_{N+1}},G),\id}$ the submanifolds of $M_{(f_{A_{2N-1}},G)}$ and $M_{(f_{D_{N+1}},G)}$ respectively, obtained by setting all non--trivially graded variables to zero: $t_{g,\bullet} = 0$. We will call these submanifolds \textit{invariant sectors}.

General theory constitutes that invariant sectors are Dubrovin--Frobenius submanifolds. In the two cases we investigate in this paper this fact was introduced in Section~\ref{section: BN}. Proposition~\ref{prop: subhierarchy} shows existence of the subhierarchies associated to these submanifolds. In order to prove Theorem~\ref{theorem: main} it remains to show that these subhierarchies are exactly BKP and CKP respectively.

\subsection{Proof of Theorem~\ref{theorem: main}}\label{section: proof}
% \begin{proof}
%  \textbf{A type Landau--Ginzburg orbifold}.
 By the isomorphism $M_{(A_{2N-1},\ZZ/2\ZZ)} \cong M_{D_{N+1}}$ we have to investigate the system~\eqref{eq: main PDE} for the D type Dubrovin--Frobenius manifolds. Under this isomorphism $M_{(A_{2N-1},\ZZ/2\ZZ),\id}$ is identified with $M'_{D_{N+1}} := \lbrace t_{N+1}=0 \rbrace \subset M_{D_{N+1}}$.

%  \textbf{D type Landau--Ginzburg orbifold}.
 By the isomorphism $M_{(D_{N+1},\ZZ/2\ZZ)} \cong M_{A_{2N-1}}$ we have to investigate the system~\eqref{eq: main PDE} for the A type Dubrovin--Frobenius manifolds. Under this isomorphism $M_{(D_{N+1},\ZZ/2\ZZ),\id}$ is identified with $M'_{A_{2N-1}} := \lbrace t_{2}=\dots = t_{2N-2} = 0 \rbrace \subset M_{A_{2N-1}}$.

 By Proposition~\ref{prop: natural submanifolds}, both $M'_{D_{N+1}}$ and $M'_{A_{2N-1}}$ are natural submanifolds and by Proposition~\ref{prop: subhierarchy} their flows are given exactly by Eq.~\eqref{eq: dlD-1}. Theorem~7.1 of \cite{BDbN} completes the proof for the dispersionless case.

 Dispersionfull statement for A type Landau--Ginzburg orbifolds follows immediately because in the terminology of Section~\ref{section: D type dispersionfull hierachy} restricting to the invariant sector we consider only the positive flows on BKP hierarchy.

 Dispersionfull statement for D type Landau--Ginzburg orbifolds and CKP hierarchy follows from Theorem~\ref{theorem: KZ on CKP}.

% \end{proof}

%%%%%%%%%%%%%%%%%%%%%%%%%%%%%%%%%%%%%%%%%%%%%%%%%%%%%%%%%%%%%%%%%%%%%%%%%%%%%%%%%%%%%%%%%%%%%%%%%%%%%%%%
%%%%%%%%%%%%%%%%%%%%%%%%%%%%%%%%%%%%%%%%%%%%%%%%%%%%%%%%%%%%%%%%%%%%%%%%%%%%%%%%%%%%%%%%%%%%%%%%%%%%%%%%
%%%%%%%%%%%%%%%%%%%%%%%%%%%%%%%%%%%%%%%%%%%%%%%%%%%%%%%%%%%%%%%%%%%%%%%%%%%%%%%%%%%%%%%%%%%%%%%%%%%%%%%%
%%%%%%%%%%%%%%%%%%%%%%%%%%%%%%%%%%%%%%%%%%%%%%%%%%%%%%%%%%%%%%%%%%%%%%%%%%%%%%%%%%%%%%%%%%%%%%%%%%%%%%%%

\end{document}